%\documentc1lass[aps,pra,preprint,groupedaddress,showpacs]{revtex4}
%\documentclass[aps,pra,preprint,superscriptaddress]{revtex4}
%\documentclass[aps,pra,superscriptaddress,showpacs,floatfix,12pt]{revtex4}
% \documentclass[aps,pra,floatfix,amsmath,ams,superscriptaddress,showpacs,12pt]{revtex4}
\documentclass[aps,prl,twocolumn,floatfix,amsmath,ams,showpacs]{revtex4}
% You should use BibTeX and apsrev.bst for references
% Choosing a journal automatically selects the correct APSkjambun2@uiuc.edu
% BibTeX style file (bst file), so only uncomment the line
% below if necessary.kjambun2@uiuc.edu
%\bibliographystyle{apsrev}
\usepackage{graphicx}
\usepackage{amsmath,amssymb}
\usepackage{amsfonts}
\usepackage{mathrsfs}
\usepackage{epsfig}
\usepackage{dcolumn}
\usepackage{enumerate}
\usepackage{amsthm}

\begin{document}

\newcommand{\newc}{\newcommand}
\newtheorem{proposition}{Proposition}
\newc{\beq}{\begin{equation}}
\newc{\eeq}{\end{equation}}
\newc{\kt}{\rangle}
\newc{\br}{\langle}
\newc{\beqa}{\begin{eqnarray}}
\newc{\eeqa}{\end{eqnarray}}
\newc{\pr}{\prime}
\newc{\longra}{\longrightarrow}
\newc{\ot}{\otimes}
\newc{\rarrow}{\rightarrow}
\newc{\h}{\hat}
\newc{\bom}{\boldmath}
\newc{\btd}{\bigtriangledown}
\newc{\al}{\alpha}
\newc{\be}{\beta}
\newc{\ld}{\lambda}
\newc{\sg}{\sigma}
\newc{\p}{\psi}
\newc{\eps}{\epsilon}
\newc{\om}{\omega}
\newc{\mb}{\mbox}
\newc{\tm}{\times}
\newc{\hu}{\hat{u}}
\newc{\hv}{\hat{v}}
\newc{\hk}{\hat{K}}
\newc{\ra}{\rightarrow}
\newc{\non}{\nonumber}
\newc{\ul}{\underline}
\newc{\hs}{\hspace}
\newc{\longla}{\longleftarrow}
\newc{\ts}{\textstyle}
\newc{\f}{\frac}
\newc{\df}{\dfrac}
\newc{\ovl}{\overline}
\newc{\bc}{\begin{center}}
\newc{\ec}{\end{center}}
\newc{\dg}{\dagger}
\newc{\prh}{\mbox{PR}_H}
\newc{\prq}{\mbox{PR}_q}
\newc{\tr}{\mbox{tr}}
\newc{\pd}{\partial}
\newc{\qv}{\vec{q}}
\newc{\pv}{\vec{p}}
\newc{\dqv}{\delta\vec{q}}
\newc{\dpv}{\delta\vec{p}}
\newc{\mbq}{\mathbf{q}}
%\newc{\dmbq}{\mathbf{\delta q}}
\newc{\mbqp}{\mathbf{q'}}
\newc{\mbpp}{\mathbf{p'}}
\newc{\mbp}{\mathbf{p}}
\newc{\mbn}{\mathbf{\nabla}}
\newc{\dmbq}{\delta \mbq}
\newc{\dmbp}{\delta \mbp}
\newc{\T}{\mathsf{T}}
\newc{\J}{\mathsf{J}}
\newc{\sfL}{\mathsf{L}}
\newc{\C}{\mathsf{C}}
\newc{\B}{\mathsf{M}}
\newc{\V}{\mathsf{V}}
%\newc{\det}{\mbox{Det}}

% The notion of entanglement is a purely quantum mechanical phenomenon that is absent in classical physics. 
% It is studied since Schr\"odinger and the famous paper of Einstein, Podolsky and Rosen (EPR) 
% \cite{epr_paradox}, correlations of which seem to imply nonlocality. The work of Bell \cite{bell} and 
% others led to famous inequalities that quantified the extent to which classical correlations can be surpassed.
%  These inequalities were experimentally verified by Aspect {\it et.al} \cite{aspect}.

\title{Using Partial Transpose and Realignment to generate Local Unitary Invariants}

% \author{Udaysinh T. Bhosale \footnote{e-mail: bhosale@physics.iitm.ac.in}}
% % \affiliation{Department of Physics, Indian Institute of Technology Madras, Chennai, 600036, India}
% \author{K. V. Shuddhodan \footnote{e-mail: kvshud@math.tifr.res.in}}
% % \affiliation{}
% \altaffiliation{School of Mathematics, Tata Institute of Fundamental Research, Homi Bhabha
% Road, Bombay 400005, India}
% \author{Arul Lakshminarayan \footnote{e-mail: arul@physics.iitm.ac.in}}
% \affiliation{Department of Physics, Indian Institute of Technology Madras, Chennai, 600036, India}

\author{Udaysinh T. Bhosale}
\email{bhosale@physics.iitm.ac.in}
\author{K. V. Shuddhodan}
%\email{kvshud@math.tifr.res.in}
\altaffiliation{Present address: School of Mathematics, Tata Institute of Fundamental Research, Homi Bhabha
Road, Mumbai 400005, India.}
\author{Arul Lakshminarayan}
%\email{arul@physics.iitm.ac.in}
\affiliation{Department of Physics, Indian Institute of Technology Madras, Chennai, 600036, India}
\date{\today}

\preprint{IITM/PH/TH/2010/11}
\begin{abstract}
 Motivated by link transformations of lattice gauge theory, a method for generating local unitary invariants, especially for a system of qubits, has been pointed out in an earlier work [M. S. Williamson {\it et. al.}, Phys. Rev. A {\bf 83}, 062308 (2011)].  This paper first points the equivalence of the so constructed transformations to the combined operations of partial transpose and realignment. This allows construction of local unitary invariants of any system, with subsystems of arbitrary dimensions. Some properties of the resulting operators and consequences for pure tripartite higher dimensional states are briefly discussed.

\end{abstract}
\pacs{03.67.Bg, 03.67.Mn}

\maketitle

Entanglement, the remarkable nonlocal feature of quantum mechanics, has been extensively studied in the recent past \cite{Horodeckirpm}. Owing mainly due to its role in quantum information protocols,  
for example in teleportation \cite{Teleport}, dense coding \cite{Superdense}, and channel discrimination \cite{Pianiwatrous}, as well as due to its 
presence in quantum algorithms that have a speedup over classical ones \cite{Jozsalinden}.
It also plays an increasingly prominent role in condensed matter physics \cite{Amico2008}, and presumably also affects macroscopic observables such as magnetic properties in some solids \cite{Ghosh}.  

However, to detect and measure entanglement in a general state, represented by a density matrix, is a difficult problem. 
It has been solved for the case of two qubit states \cite{Wooters,Wootersentform,mhorodecki}, bipartite states of a
qubit and a qutrit \cite{mhorodecki}, and all bipartite pure states.
There are various measures of entanglement, for example the von Neumann entropy of any one subsystem in a bipartite pure state
is such a measure, the concurrence introduced in \cite{Wooters,Wootersentform} measures entanglement between 
two qubits in a pure or mixed state, while negativity and log-negativity \cite{vidal,logneg} is invoked for a general bipartite mixed state which uses 
the positive, but not completely positive, map of partial transposition \cite{Peres}. The partial transpose was introduced to detect entanglement,
and provides a sufficient but not necessary condition. One other such criterion uses the operation of ``realignment" to be expanded on further below.

Another approach to study entanglement, especially of a multipartite kind, is by studying the local unitary (LU) invariants of the system 
which consists of invariants under arbitrary unitary transforms restricted to the individual subsystems. In as much as entanglement quantifies non-local properties, entanglement measures remain invariant under LU operations and hence the importance of their study. The spectra of the density matrix itself and the various reduced density matrices got by tracing out subsystems are such LU invariants. However, it helps to have invariants that are polynomials in the 
entries of the density matrix \cite{Grassl98}, and those whose physical interpretation in terms of entanglement is available. 
These invariants uniquely determine the orbit of the state under these local operations. 
For example three qubit pure states have five independent LU invariants 
\cite{Linden98,Carteret99} excluding the trace. In a recent paper Williamson {\it et.al}~\cite{Williamson11}, inspired by lattice gauge theory,
  have given a method of generating these invariants by associating them with a closed path joining some or all the 
qubits where two consecutive qubits on the path are connected by a ``link transformation''.
More recently index-free formulas for invariants of $k$ qudit system up to degree six by using graph-theoretic methods have been given \cite{Szalay12}.

In this paper, it is first shown that the link transformation in 
\cite{Williamson11} is unitarily equivalent to the combined operations of partial transpose \cite{Peres} and realignment \cite{Chen03}  
[refer Eq.~(\ref{mainresult})]. Thus it is interesting that two rather independent operations on which entanglement criteria are based
 come together in the construction of local invariants. 
This result immediately suggests a way to generalize to a system of an arbitrary set of qudits,
each not necessarily of identical dimensions.  Such a generalization is then shown to be LU invariant as well.
One of the advantages of this method is that it does not need the generalization of Pauli matrices in higher 
dimensions to get the invariants as required for the link transformation approach of \cite{Williamson11}.

For the present, it helps to first summarize the central result of \cite{Williamson11}. Consider a lattice of points 
$a_1, \, a_2,\,.~.~.~,\, a_N$ where each point represents a qubit and $N$ is the total number of qubits.
The invariants are generated by various closed path (containing some or all the qubits) and two 
consecutive qubits are connected by a link transformation. For example consider a closed path as shown
in Fig.~1 and qubits $a_1$ and $a_2$ are lying on it then the link transformation 
connecting them is denoted by $S(a_2,a_1)$, %(refer Fig.~(\ref{pathdigr})) 
whose elements are given by
\beq
\label{linkmatrix1}
S(a_2,a_1)_{nm}=\dfrac{1}{2} \mbox{tr}[\rho_{12}\; \sigma^{a_1}_m \otimes \sigma^{a_2}_n].
\eeq

Here $\rho_{12}$ is the reduced density matrix of qubits $a_1$ and $a_2$,
$\sigma_l$  ($l=0, 1, 2, 3$) are $2\times2$ matrices such that $l=0$ corresponds to the identity matrix and
$l=1,2,3$ corresponds to the standard Pauli matrices $\sigma_x, \sigma_y, \sigma_z$ respectively. In a similar way consecutive qubits are connected on the path by corresponding transformation matrices. The interpretation of $S(a_2,a_1)$ was provided as ${\mathbf m}_2 = S(a_2,a_1) {\mathbf m}_1$, where ${\mathbf m}_1$ represented an arbitrary measurement operator in the space of subsystem $a_1$, while
 ${\mathbf m}_2$  was the resultant state of subsystem $a_2$, all of these being written in the $\{\sigma_l\}$ basis. 

Then for a given closed path, for example consisting of $K$ qubits, not necessarily all distinct: $\{a\}_K\equiv (a_1, \, a_2,\, \ldots,\, a_K)$ , the quantity
\beq
\label{smatrix}
\mbox{tr}[ S(a_1,a_K) \cdots S(a_3,a_2) S(a_2,a_1)]
\eeq
was shown to be a LU invariant. Note that this expression is to be read from right to left. Different path coordinates $a_i$ can refer to the same subsystem, depending on the actual path. One can obtain invariants by taking various closed paths. In \cite{Williamson11} a maximal set of five independent invariants of three qubit pure states were calculated in this manner. 

It is now shown that realignment of a partial transpose is a unitary transform of the link transformation $S(a_2,a_1)$.
In the partial transpose (PT) operation, transposition is done only on one subsystem. As mentioned already, PT is a positive but not a completely positive map and can hence be used to detect entanglement. Consider a bipartite density matrix $\rho_{12}$,
which refers to the reduced density matrix of the subsystems at points $a_1$ and $a_2$ of the path, and any orthonormal product basis $\{|i\alpha\kt\}$ in the corresponding state space, then the transposition on only the subsystem labeled $a_2$, denoted as $\rho_{12}^{T_2}$, is given by the matrix elements:
\begin{equation}
\langle i|\langle \beta| \rho_{12}^{T_2}|j\rangle|\alpha \rangle = 
\langle i|\langle \alpha|\rho_{12}|j\rangle|\beta \rangle.
\end{equation}
The PT criterion states that if $\rho_{12}^{T_2}$ is negative then the state is entangled \cite{Peres}, but otherwise it may or may not be separable, with separability guaranteed only for $2\times2$ and $2\times3$ systems \cite{mhorodecki}. 
This implies that positivity under PT is a necessary but not sufficient condition for separability.

The other operation of interest is realignment, to which is associated what is called in literature the
{\it{computable cross norm}} (CCN) {\it{criterion}} \cite{Chen03,Oliver04}, or simply the 
realignment criterion. The corresponding operation on the density matrix $\rho_{12}$, denoted as $\mathcal{R}\left(\rho_{12}\right)$
 is given by:
\beq
\langle i|\langle j|\left(\mathcal{R}(\rho_{12})\right)|\alpha\rangle|\beta \rangle= \langle i|\langle \alpha|\rho_{12}|j\rangle|\beta \rangle.
\label{Eq:RealDef}
\eeq
The criterion is that if the state $\rho_{12}$ is separable then $\Vert \mathcal{R}(\rho_{12})\Vert_1 \le1$, where 
$\Vert M \Vert_1$ is the  trace norm equal to $\mbox{tr}\sqrt{M M^{\dagger}}$ \cite{Oliver03}. 
This condition is found to detect some bound entangled states, these being positive under PT and hence not being detected by the 
corresponding criterion \cite{Oliver04,mhorodeckibound}.

Define the map $\rho_{12} \mapsto \mathcal{R}(\rho_{12}^{T_2})$, where {\it both} PT and realignment are affected serially. 
As both of these operations are merely permutations of the matrix elements, so is this combination. It is now shown that for two qubits, 
$S(a_1,a_2)$ is unitarily equivalent to $\mathcal{R}(\rho_{12}^{T_2})$. Starting from Eq.~(\ref{linkmatrix1}) and using an arbitrary 
orthogonal product basis {$|i\beta\kt$} (where $i$, $\beta=0,1$) and Einstein's summation convention (operative throughout the paper),
one obtains the following:
%\begin{equation}
% S(a_2,a_1)_{nm}=\dfrac{1}{2} \mbox{tr}[\rho_{12} \sigma^{a_1}_m \otimes \sigma^{a_2}_n]
%\end{equation}
%
\begin{eqnarray}
\label{amnform}
\begin{split}
 S(a_2,a_1)_{nm} &= \dfrac{1}{2} \br i \beta| \rho_{12} \; \sigma^{a_1}_m \otimes \sigma^{a_2}_n |i \beta \kt \\
&= \dfrac{1}{2} \br i \beta| \rho_{12}|j\alpha\kt \br j\alpha| \sigma^{a_1}_m \otimes \sigma^{a_2}_n |i \beta \kt\\
&= \dfrac{1}{2} \br \beta \alpha|  \mathcal{R}(\rho_{21}^{T_1}) |ji\kt\br j|\sigma^{a_1}_m |i\kt\br\alpha|\sigma^{a_2}_n|\beta\kt.
\end{split}
\end{eqnarray}
% where superscript $T_2$ denotes transposition done on second subsystem and $\mathcal{R}$ is the
% realignment operation defined in \cite{Chen03}  such that 
% \begin{equation*} 
% \br i j| \mathcal{R}(\rho_{12}^{T_2})|\alpha\beta\kt=\br i \beta|\rho_{12}| j \alpha \kt \;
% \mbox{and}\; S(a_1,a_2)\; \mbox{is the link transformation given in \cite{Williamson11}}.
% \end{equation*}
Define a $4 \times 4$ matrix $U$ with elements:
\begin{equation} 
\label{defineU}
\br ji|U|m\kt=\dfrac{1}{\sqrt{2}}\br j|\sigma_m|i\kt,
\end{equation}
where $i,\,j$=0,1 and $m= 0,1,2,3$. The matrix $U$ is independent of the qubit label, and hence these have been omitted from the Pauli matrix symbols.
Thus using Eq.~(\ref{defineU}) and the last expression of Eq.~(\ref{amnform}) gives the following equation:
\begin{eqnarray*}
  S(a_2,a_1)_{nm} &=&  \br \alpha\beta|U|n\kt  \br \beta \alpha|  \mathcal{R}(\rho_{21}^{T_1}) |ji\kt  \br ji|U|m\kt.
\end{eqnarray*}
On taking the complex conjugate of  Eq.~(\ref{defineU}) and using the definition of $U$ matrix,
it can be shown that $\br  \alpha \beta |U|n \kt =\br n|U^{\dagger}|\beta \alpha \kt$. 
Thus one obtains the advertised relation between the link transformation $[S(a_2,a_1)]$, and the operations 
of  partial transpose and realignment as
\begin{equation}
\label{mainresult}
S(a_2,a_1) = U^{\dagger} \mathcal{R}(\rho_{21}^{T_1}) U.
\end{equation}
 
 The matrix $U$ written explicitly in the standard basis is
\begin{equation}
\label{matrixU}
U = \dfrac{1}{\sqrt{2}}\left( \begin{array}{llrr}
	      1 & 0 &  0 & 1\\
	      0 & 1 & -i & 0\\
	      0 & 1 &  i & 0\\     
	      1 & 0 &  0 & -1\\
\end{array}\right).
\end{equation}
%which is the identity and the three Pauli matrices deformed into 4-vectors and arranged along columns in a matrix.
Deforming the $2\times 2$ identity matrix and the three Pauli matrices into four-vectors and arranging them along columns makes the 
matrix $\sqrt{2} \, U$, the first column of which corresponds to the identity matrix while the other three correspond to the matrices
$\sigma_1$, $\sigma_2$ and $\sigma_3$ respectively. These can also be thought of as resulting from realigning the 
$\{\sigma_l,\; 0 \le l \le 3\}$ basis. Indeed any matrix which is a $(d_1 d_2 \times d_1 d_2)$-dimensional array can be realigned into 
an array of dimension $d_1^2 \times d_2^2$ using the definition in Eq.~(\ref{Eq:RealDef}). For example, the $2\times 2$ identity matrix, 
an array with $d_1=2$ and $d_2=1$ is realigned into the $4\times 1$ vector $(1,0,0,1)^T$, the first column in the matrix $\sqrt{2}\, U$, 
while $\sigma_1$ realigned into the vector $(0,1,1,0)^T$, is the second column, and so on.

It is readily seen that the matrix $U$ is indeed an unitary matrix with the additional property that $U U^T = \mathcal{S}$, the SWAP operator.
An invariant corresponding to the closed loop $(a_1, \, a_2,\, \ldots,\, a_K)$ is then
\begin{eqnarray}
\begin{split}
& \mbox{tr}[ S(a_1,a_K) \cdots S(a_3,a_2) S(a_2,a_1)] \\
&= \mbox{tr}[  
U^{\dagger} \mathcal{R}(\rho_{1K}^{T_K}) U  \cdots  U^{\dagger} \mathcal{R}(\rho_{32}^{T_2}) U U^{\dagger} \mathcal{R}(\rho_{21}^{T_1}) U ]\\
\label{equality}
&= \mbox{tr}[   \mathcal{R}(\rho_{1K}^{T_K}) \cdots  \mathcal{R}(\rho_{32}^{T_2}) \mathcal{R}(\rho_{21}^{T_1})].
\end{split}
\end{eqnarray}
Hence the link transformation matrices such as $S(a_2,a_1)$ can be replaced by $\mathcal{R}(\rho_{21}^{T_1})$, and referred to as transformation matrices themselves. That these are link transformations in their own right and are in fact amenable to the identical interpretations as $S(a_2,a_1)$ in \cite{Williamson11} follows from expressing the measurement operator at subsystem $a_1$ in the standard basis ${\mathbf e}_{ij}$, which are matrices with 1 at positions $ij$ and 0 elsewhere and $i,j \in\{1,2\}$. That is the following is the equivalent to Eq.~(\ref{linkmatrix1}):
\beq
\label{linkmatrix2}
\left(\mathcal{R}(\rho_{21}^{T_1})\right)_{ij;\alpha \beta}= \mbox{tr}\left[\rho_{12}\; {\mathbf e}^{a_1}_{\alpha \beta} \otimes \left({\mathbf e}^{a_2}_{ij}\right)^T \right].
\eeq
Generalizing beyond qubits to subsystems of higher dimensions $d_i$ (qudits) it is shown in Proposition~(\ref{prop1}) that the 
last term of Eq.~(\ref{equality}) is a LU invariant, thus obviating the need to specially generalize link transformations such as $S(a_2,a_1)$. 
The quantity $\mathcal{R}(\rho_{21}^{T_1})$ is in general a rectangular array of dimension $d_2^2 \times d_1^2$. The dimensions of the various rectangular arrays dovetail such that the final array on completion of the loop based at $1$ is a square matrix of dimension $d_1^2$.

\begin{proposition}
\label{prop1}
Under local unitary operations, $U_i$ let the transformed two-body density matrices be 
$\tilde{\rho}_{ij} =U_i \otimes U_j \rho_{ij} U_i^{\dagger} \otimes U_j^{\dagger}$. 
If $\{a\}_K \equiv (a_1, \, a_2,\, \ldots,\, a_K)$ is a closed path, then
\beq
\label{LU}
\begin{split}
 &\mathcal{P}(\{a\}_K) \equiv \mathcal{R}(\rho_{1K}^{T_K}) \cdots  \mathcal{R}(\rho_{32}^{T_2}) \mathcal{R}(\rho_{21}^{T_1})=\\
&(U_1 \otimes U_1^{*})^{\dagger}\mathcal{R}({\tilde{\rho}}_{1K}^{T_K}) \cdots  \mathcal{R}({\tilde{\rho}}_{32}^{T_2}) \mathcal{R}({\tilde{\rho}}_{21}^{T_1})
(U_1 \otimes U_1^{*}).
\end{split}
\eeq
\end{proposition}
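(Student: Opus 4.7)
The plan is to show that every factor $\mathcal{R}(\rho_{ij}^{T_j})$ in the product transforms, under $\rho_{ij}\to\tilde{\rho}_{ij}$, by a conjugation $\mathcal{R}(\rho_{ij}^{T_j})\mapsto V_i\,\mathcal{R}(\rho_{ij}^{T_j})\,V_j^\dagger$ with the site-dependent $d_k^2\times d_k^2$ unitaries $V_k := U_k \otimes U_k^*$. The closed-loop structure will then force every interior pair $V_k^\dagger V_k$ (for $k = 2,\ldots,K$) to collapse by unitarity of $V_k$, leaving only $V_1$ on the far left and $V_1^\dagger$ on the far right, which is precisely the claimed identity.

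First I would compute how partial transposition reacts to a local unitary. Starting from $\tilde{\rho}_{ij}=(U_i\otimes U_j)\rho_{ij}(U_i^\dagger \otimes U_j^\dagger)$ and expanding in a product basis, one uses $(|l\rangle\langle n|)^T = |n\rangle\langle l|$ on the transposed subsystem to conclude
\[
\tilde{\rho}_{ij}^{T_j} = (U_i \otimes U_j^*)\,\rho_{ij}^{T_j}\,(U_i^\dagger \otimes U_j^T),
\]
so the partial transpose simply replaces $U_j$ by $U_j^*$ in the second tensor slot. Next I would establish the corresponding intertwining identity for realignment. Directly from Eq.~(\ref{Eq:RealDef}), a short index calculation yields, for arbitrary matrices of compatible size,
\[
\mathcal{R}\bigl((A \otimes C)\,M\,(B \otimes D)\bigr) = (A \otimes B^T)\,\mathcal{R}(M)\,(C^T \otimes D).
\]
Substituting $A = U_i$, $B = U_i^\dagger$, $C = U_j^*$, $D = U_j^T$ and simplifying $(U_i^\dagger)^T = U_i^*$ and $(U_j^*)^T = U_j^\dagger$ gives the key relation $\mathcal{R}(\tilde{\rho}_{ij}^{T_j}) = V_i\,\mathcal{R}(\rho_{ij}^{T_j})\,V_j^\dagger$.

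Inserting this into $\mathcal{R}(\tilde{\rho}_{1K}^{T_K})\cdots\mathcal{R}(\tilde{\rho}_{32}^{T_2})\mathcal{R}(\tilde{\rho}_{21}^{T_1})$, each interior site $k\in\{2,\ldots,K\}$ contributes a factor $V_k^\dagger V_k = (U_k^\dagger U_k)\otimes (U_k^T U_k^*) = I$; these cancel in succession to leave $V_1\,\mathcal{P}(\{a\}_K)\,V_1^\dagger$, and rearranging yields the statement of the proposition. I expect the realignment intertwining identity to be the step requiring the most care: because $\mathcal{R}$ mixes ket and bra indices between the two subsystems, the unitary multiplying $\mathcal{R}(M)$ on the left is built from $U_i$ together with $U_i^*$ (rather than with another copy of $U_i$), and the transposes appearing on the right must be tracked consistently with this mixing. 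Once that identity is in hand, the telescoping and the unitarity of $V_k$ are immediate.
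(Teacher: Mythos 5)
Your proof is correct and follows essentially the same route as the paper's: both reduce to the single-link covariance relation $\mathcal{R}(\tilde{\rho}_{ij}^{T_j}) = (U_i\otimes U_i^{*})\,\mathcal{R}(\rho_{ij}^{T_j})\,(U_j\otimes U_j^{*})^{\dagger}$ (the paper's Eq.~(\ref{pt210})) followed by telescoping the interior unitaries around the closed loop. The only difference is presentational: you obtain that relation by composing two general covariance identities for partial transpose and realignment, whereas the paper verifies it by a direct index computation treating $U_1$ and $U_2$ separately.
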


\begin{proof} This follows from the following observation:
%\begin{subequations}
%\begin{eqnarray} 
%%\begin{split}
%{\tilde{\rho}}_{21}&=&(U_2 \otimes I_1) \rho_{21} (U_2^{\dagger}\otimes I_1) \label{21transf}\\
%\mbox{and}\; \;{\tilde{\rho}}_{32}&=&(I_3 \otimes U_2 ) \rho_{23} (I_3 \otimes U_2^{\dagger}), \label{32transf}
%%\end{split}
%\end{eqnarray}
%\end{subequations}
%\begin{subequations}
\begin{eqnarray} 
%\mbox{then}\;\;
\mathcal{R}(\rho_{21}^{T_1}) &=& (U_2 \otimes U_2^{*})^{\dagger} \mathcal{R}(\tilde{\rho}_{21}^{T_1})(U_1 \otimes U_1^{*}). \label{pt210}
%\mathcal{R}(\rho_{32}^{T_2}) &=&\mathcal{R}(\tilde{\rho}_{32}^{T_2})(U_2 \otimes U_2^{*}),\label{pt320}\\
\label{pro1eqns}
%\mathcal{R}(\rho_{21}) &=&  (U_2 \otimes U_2^{*})^{\dagger} \mathcal{R}(\tilde{\rho}_{21}) (U_1^{*} \otimes U_1).
%\label{nopt320}
%\mbox{and}\;\; \mathcal{R}(\rho_{32}) &=&  \mathcal{R}({\tilde{\rho}}_{32}) (U_2^{*} \otimes U_2).
\end{eqnarray}
%\end{subequations}

For simplicity, and without loss of generality, consider the case of only a local unitary $U_2$ acting on the second subsystem, that is
$\rho_{21}=( U_2^{\dagger}\otimes I_1) {\tilde \rho_{21}} (U_2 \otimes I_1)$.
It follows that:
\begin{equation}
(\rho_{21})_{i\alpha;j\beta} = (U_2^{\dagger})_{i,i'}
({\tilde \rho_{21}})_{i'\alpha;j'\beta} (U_2)_{j',j}.\\
 \end{equation}
Using the definitions of the realignment and partial transpose operations the above is rewritten as:
\begin{eqnarray}
\begin{split}
\left(\mathcal{R}(\rho_{21}^{T_1})\right)_{ij;\beta\alpha} &= (U_2^{\dagger})_{i,i'}
\left(\mathcal{R}({\tilde{\rho}}_{21}^{T_1})\right)_{i'j';\beta\alpha}(U_2)_{j',j}\\
&= (U_2^{\dagger})_{i,i'} (U_2^{*})^{\dagger}_{j,j'} \left(\mathcal{R}({\tilde{\rho}}_{21}^{T_1})\right)_{i'j';\beta\alpha}.
\end{split}
\end{eqnarray}
This leads to the left action (that is multiplication by $(U_2 \otimes U_2^{*})^{\dagger}$ on the left) in Eq.(\ref{pt210}).
Considering separately the case when only the local unitary $U_1$ is operative, leads to the corresponding right action  
 (multiplication by $U_1 \otimes U_1^{*}$ on the right) in Eq.(\ref{pt210}) and completes the proof. 
Thus if $U_1$, $U_2$, $\ldots$ $U_K$ are local unitary operators acting on subsystems $1$, $2$, $\ldots$, $K$ then their combined action on the successive bipartite states $\rho_{i+1i}$ leads to the claim in Eq.~(\ref{LU}).
\end{proof}

Thus it follows that the eigenvalues of $\mathcal{P}(\{a\}_K)$ are LU invariants, as well as its trace. This provides the generalization of the link transformation approach to generate local unitary invariants by using realignment and PT operations {\it i.e.} replacing $S(a_2,a_1)$ by $\mathcal{R}(\rho_{21}^{T_1})$, see Fig.~(\ref{pathdigr}). In fact the above proof does not need special properties of the Pauli matrices 
used in \cite{Williamson11}, and obviates completely the need for their generalizations and hence presents a much simpler algorithm.

\begin{figure}
\begin{center}
%        \resizebox{100mm}{!}{\includegraphics{pathdigr.eps}} 
\includegraphics[scale=0.7]{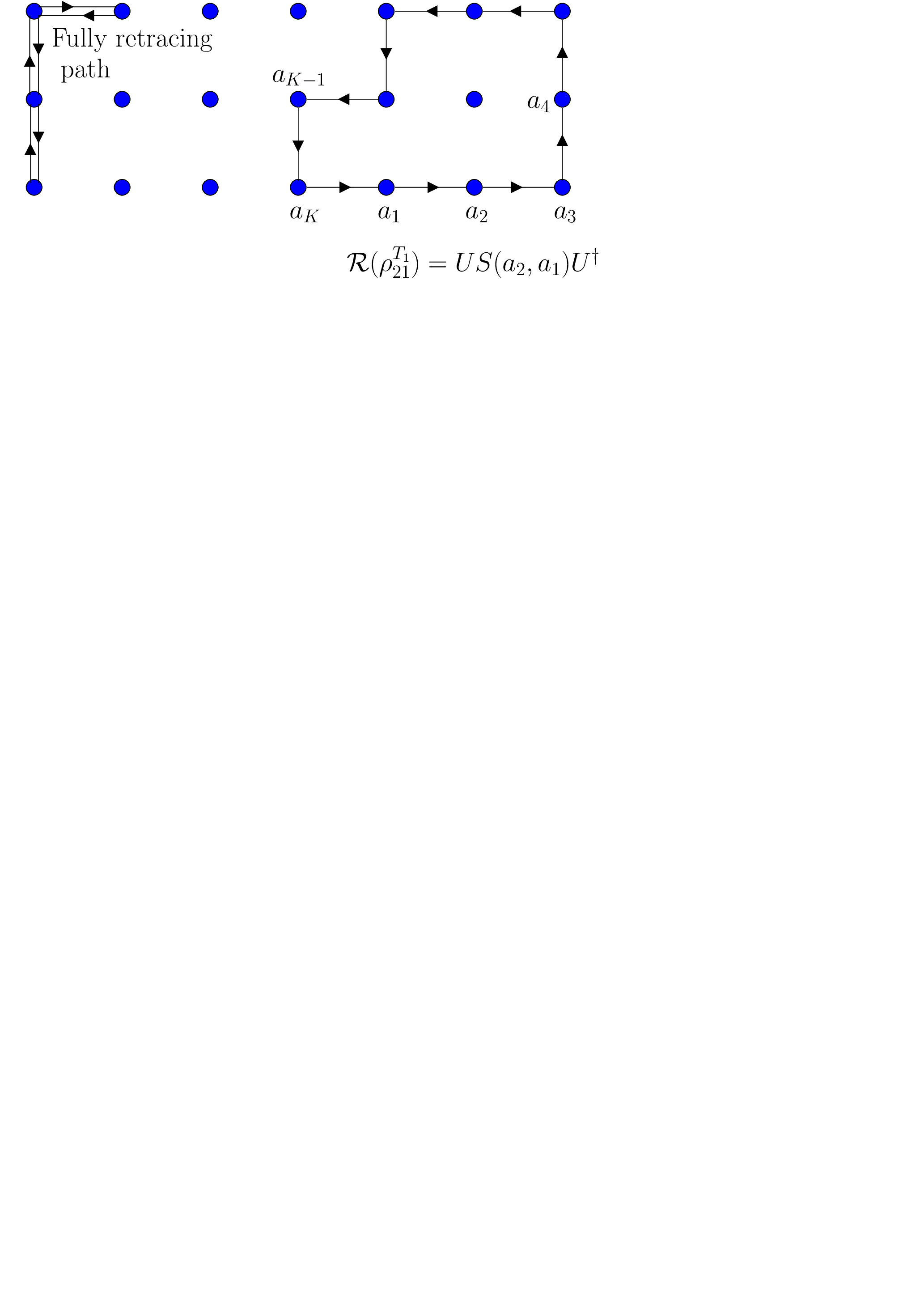}
\caption{(Color online) Closed path connecting some qudits of a total of $K$ qudits is shown.
% Qudits $a_1$ and $a_2$ are connected by transformation matrix $\mathcal{R}(\rho_{12}^{T_2})$ and 
% similarly other two consecutive qudits on the path. 
Also shown is a fully retracing path on four qudits.
Any two consecutive qudits on a path are connected by a transformation matrix, for example, 
qudits $a_1$ and $a_2$ are connected by $\mathcal{R}(\rho_{12}^{T_2})$.
The equality shown corresponds to the case of qubits.} 
\label{pathdigr} 
\end{center}
 \end{figure} 
 
 The need for the joint operations of realignment and PT is interesting and if say one uses
 only realignment, it is easy to see that this does not in general lead to LU invariants. This follows
 as 
 \beq
 \mathcal{R}(\rho_{21}) =  (U_2 \otimes U_2^{*})^{\dagger} \mathcal{R}(\tilde{\rho}_{21}) (U_1^{*} \otimes U_1).
\label{nopt320}
 \eeq
 While this is derived in a similar way as Eq.~(\ref{pro1eqns}) note the different ordering of the unitary matrices in them. 
  Thus if {\it orthogonal} transforms are used instead of the more general unitary ones, the operation of PT can be dispensed with.
 Alternatively if one restricts oneself to a real state space (such as the so-called rebit space \cite{Rungta01}) these 
 will in fact be local invariants.

Given a path $\{a\}_K$ and the associated operator $\mathcal{P}(  \{a\}_K)$, there are many different ways of 
associating LU invariants, even if one were to restrict to polynomials of the density matrix elements. For instance
the coefficients of the characteristic polynomials of the operator are such invariants and these include the trace and determinant. That such coefficients are real follows from a property that is stated as part of Proposition~(\ref{prop2}), to be proved below.
Also if the path is fully retracing (there exists a number $L$ such that the path is $\{a_1, \ldots, a_L,a_{L-1} \ldots a_1\}$, the retracing path shown in  Fig.~(\ref{pathdigr}) has $L=4$.) then $\mathcal{P}(  \{a\}_K)$ is a positive operator itself. A necessary ingredient in the
properties of the the operator is the SWAP operation. Consider the SWAP operator $\mathcal{S}_1$ which acts on the product space of two identical dimensional spaces: $\mathcal{H}_{d_1}\otimes \mathcal{H}_{d_1}$. Its action is given by   
$\langle lm|\mathcal{S}_1|ij\rangle=\delta_{lj}\delta_{mi}$.
The SWAP operator is symmetric permutation matrix such that $\mathcal{S}_1^2=I$ which implies further that it is unitary. Swap operators for other subsystems, $\mathcal{S}_i$ are similarly defined.

\begin{proposition}
\label{prop2}
If the closed path $\{a\}_K$ is fully retracing then  $\mathcal{P}(\{a\}_K)$ is a positive operator, 
else it's eigenvalues are either real or appear as complex conjugate pairs.
\end{proposition}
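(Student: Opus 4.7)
\textit{Step 1 (two identities).} A direct index calculation using $\mathcal{R}(M)_{ab;\alpha\beta}=M_{a\alpha;b\beta}$, the explicit action of partial transpose, and the Hermiticity relation $\overline{(\rho_{ij})_{xy;zw}}=(\rho_{ij})_{zw;xy}$ establishes the two identities
\[
[\mathcal{R}(\rho_{ij}^{T_j})]^{\dagger}=\mathcal{R}(\rho_{ji}^{T_i}),\qquad
[\mathcal{R}(\rho_{ij}^{T_j})]^{*}=\mathcal{S}_i\,\mathcal{R}(\rho_{ij}^{T_j})\,\mathcal{S}_j,
\]
where $\mathcal{S}_k$ denotes the SWAP on $\mathcal{H}_{d_k}\otimes\mathcal{H}_{d_k}$. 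Geometrically, reversing a link is Hermitian conjugation, and complex-conjugating a link is a SWAP-sandwich at its two endpoints.

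\textit{Step 2 (fully retracing path).} By the first identity of Step~1, each outbound factor $\mathcal{R}(\rho_{i+1,i}^{T_i})$ is paired with an inbound-twin factor $\mathcal{R}(\rho_{i,i+1}^{T_{i+1}})$ that equals its Hermitian conjugate. Since the inbound half traverses these twins in the reversed order of the outbound half, the whole ordered product collapses to $\mathcal{P}(\{a\}_K)=C^{\dagger}C$, where $C$ is the product of the outbound factors; hence $\mathcal{P}(\{a\}_K)$ is manifestly positive semidefinite.

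\textit{Step 3 (general closed path).} Taking the complex conjugate of the cyclic product and applying the second identity of Step~1 to every factor, two copies of the same SWAP $\mathcal{S}_k$ appear between each pair of adjacent links and cancel via $\mathcal{S}_k^{2}=I$; only the base-point SWAPs at the two ends survive, yielding
\[
\mathcal{P}(\{a\}_K)^{*}=\mathcal{S}_1\,\mathcal{P}(\{a\}_K)\,\mathcal{S}_1.
\]
Since $\mathcal{S}_1$ is unitary, $\mathcal{P}(\{a\}_K)$ and $\mathcal{P}(\{a\}_K)^{*}$ are similar and therefore share the same spectrum; but $\mathrm{spec}(\mathcal{P}^{*})$ is the complex conjugate of $\mathrm{spec}(\mathcal{P})$, so the spectrum is closed under complex conjugation, which is the claim.

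The main obstacle is the verification of the two identities in Step~1 --- both are direct but somewhat delicate index manipulations that must simultaneously track Hermiticity of $\rho_{ij}$, the correct partial-transpose subsystem, and the realignment permutation. The composition arguments in Steps~2 and 3 are then essentially automatic given these identities and the loop geometry.
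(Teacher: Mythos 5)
Your argument is correct and essentially identical to the paper's: the two identities in your Step~1 are precisely the paper's Eqs.~(\ref{SRS}) and (\ref{RRd}), and Steps~2 and~3 reproduce the paper's reasoning --- the $C^{\dagger}C$ factorization for fully retracing paths and the SWAP conjugation $\mathcal{S}_1\mathcal{P}(\{a\}_K)\mathcal{S}_1=\mathcal{P}(\{a\}_K)^{*}$ for general loops. The only difference is that the paper writes out the index computations verifying the two identities, which you correctly describe but defer.
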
  
\begin{proof}
This follows from two observations:
\begin{subequations}
\beqa
 \mathcal{S}_2 \mathcal{R}(\rho_{21}^{T_1}) \mathcal{S}_1 &=& \mathcal{R}(\rho_{21}^{T_1})^*,\label{SRS}\\
 \mathcal{R}(\rho_{21}^{T_1})&=& {\mathcal{R}(\rho_{12}^{T_2})}^{\dagger}. \label{RRd}
\eeqa
\end{subequations}
 Considering  a matrix element of $\mathcal{S}_2 \mathcal{R}(\rho_{21}^{T_1}) \mathcal{S}_1$:
\begin{eqnarray}
\begin{split}
&\left(\mathcal{S}_2 \mathcal{R}(\rho_{21}^{T_1}) \mathcal{S}_1 \right)_{ij;kl} =
 (\mathcal{S}_2)_{ij;pq}(\mathcal{R}(\rho_{21}^{T_1}))_{pq;rs} (\mathcal{S}_1)_{rs;kl} \\
&= \delta_{iq} \delta_{jp} (\mathcal{R}(\rho_{21}^{T_1}))_{pq;rs}   \delta_{rl} \delta_{sk}
= (\mathcal{R}(\rho_{21}^{T_1}))_{ji;lk}\\
&= (\rho_{21})_{jk;il}
= (\rho_{21}^{*})_{il;jk} 
= (\mathcal{R}(\rho_{21}^{T_1}))_{ij;kl}^{*},
\end{split}
\end{eqnarray}
where the second last equality follows from the hermiticity of $\rho_{21}$.
This proves Eq.(\ref{SRS}), using which and inserting $S_i^2$, $i=2, \ldots, K$ as shown below, 
leads to   
\[
\begin{split}
&\mathcal{S}_1\mathcal{P}(\{a\}_K)\mathcal{S}_1=  \mathcal{S}_1 \mathcal{R}(\rho_{1K}^{T_K}) \mathcal{S}_K \cdots \mathcal{S}_3 \mathcal{R}(\rho_{32}^{T_2}) \mathcal{S}_2 
\mathcal{S}_2 \mathcal{R}(\rho_{21}^{T_1}) \mathcal{S}_1 \\
& =\left(\mathcal{R}(\rho_{1K}^{T_K})\right)^{*} \cdots  \left(\mathcal{R}(\rho_{32}^{T_2})\right)^{*} 
 \left(\mathcal{R}(\rho_{21}^{T_1})\right)^{*}  
 =\mathcal{P}(\{a\}_K)^*.
\end{split}
\]
In other words $\mathcal{P}(\{a\}_K)$ is unitarily equivalent to its complex conjugate through the SWAP $\mathcal{S}_1$. Thus it immediately follows that $\mathcal{P}(\{a\}_K)$ has a real characteristic polynomial and the eigenvalues appear as stated.

\begin{figure}
\begin{center}
%        \resizebox{100mm}{!}{\includegraphics{pathdigr.eps}} 
\includegraphics[scale=0.35]{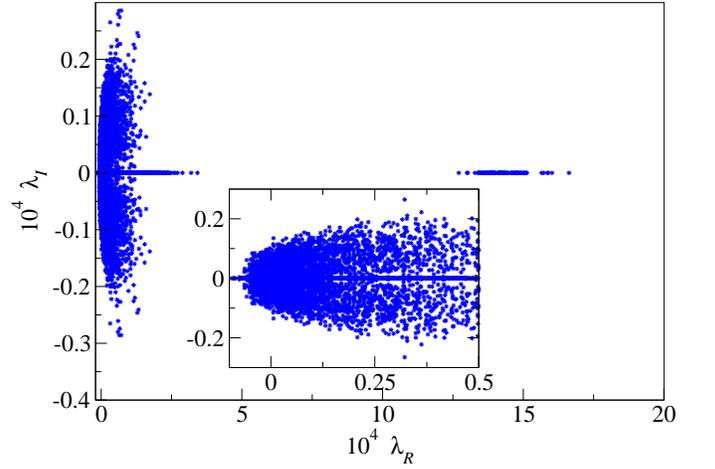}
\caption{(Color online) Real part ($\lambda_R$) and imaginary part ($\lambda_I$) of the eigenvalues of the matrix corresponding to  
$\mathcal{P}(\{1,2,3\})$ for 100 random tripartite pure states with subsystem dimensions $d_1=d_2=d_3=10$. The states are sampled according to the uniform Haar measure.  The inset show an enlarged view of the region near the origin.} 
\label{kempefig} 
\end{center}
 \end{figure} 
 
That if the path is exactly retracing, $\mathcal{P}(\{a\}_K)$ is a positive operator follows from
Eq.(\ref{RRd}) which in turn follows from:
\begin{eqnarray}
\begin{split}
&\left(\mathcal{R}(\rho_{21}^{T_1})\right)_{ij;\alpha\beta} = (\rho_{21})_{i \beta;j \alpha}
= (\rho_{21})_{j \alpha; i \beta}^{*}
= (\rho_{12})_{\alpha j;\beta i}^{*}\\
&= \left(\mathcal{R}(\rho_{12}^{T_2})\right)_{\alpha\beta;ij}^{*}
= \left(\mathcal{R}(\rho_{12}^{T_2})\right)_{ij;\alpha\beta}^{\dagger}.
\end{split}
\end{eqnarray}
\end{proof}
The closed path connecting all subsystems of a pure state in one loop maybe of special interest. 
The simplest case is that of bipartite states and the quantity $\det[\mathcal{R}(\rho_{12}^{T_2})\mathcal{R}(\rho_{21}^{T_1})]^{1/4}$ 
for the case of a two qubit pure state is $\tau/4$, where $\tau$ is the two-tangle \cite{Coffman} i.e. square of the concurrence. 

Finally, some consequences for tripartite higher dimensional pure states are discussed. 
For example for three qubits, $\mbox{tr}[\mathcal{P}(\{1,2,3\})]$ over such a path is the Kempe 
invariant \cite{Kempe99,Williamson11}. Its generalization to qudits then is evident from the formulation 
present above and is $\mbox{tr}[\mathcal{P}(\{1,\ldots,N\})]$ valid for $N$ subsystems of arbitrary 
dimensions. For example the eigenvalues of 
$\mathcal{P}(\{1,2,3\})= \mathcal{R}(\rho_{13}^{T_3}) \mathcal{R}(\rho_{32}^{T_2}) \mathcal{R}(\rho_{21}^{T_1})$ 
are shown in Fig.~(\ref{kempefig}) for the case of random pure tripartite states of three systems each of 
dimension 10. It is observed that one real eigenvalue per realization is significantly larger in magnitude compared to others. 
This is a robust feature that is present in many operators, including elemental ones such as $ \mathcal{R}(\rho_{21}^{T_1})$. 
The origin of this is not hard to understand if one assumes a diagonally dominant density matrix, as typical ones indeed are. 
For a random pure tripartite state in a space of dimension $d_1\times d_2\times d_3$, the diagonal dominance of the density matrix $\rho_{21}$ follows from the fact that the modulus of the diagonal elements is of order $1/(d_1 d_2)$ and that of the off-diagonal elements is $1/(d_1 d_2\sqrt{d_3})$ \cite{Arulentpow}.

It is also not hard to see that when the state $|\psi_{123}\kt$ is bi-separable, in particular $|\psi_{123}\kt=|\phi_{12}\kt\otimes |\chi_{3}\kt$, the eigenvalue spectrum of $\mathcal{P}(\{1,2,3\})$ consists of only one non-zero real value equal to $\mbox{tr}[(\rho_{12}^{T_2})^3]$. For the path $1\rarrow 2 \rarrow 1 \rarrow 2$, the same state gives for another invariant $I_6' \equiv \mbox{tr}(\mathcal{P}(\{1,2,1,2\}))= \mbox{tr}[\mathcal{R}(\rho_{12}^{T_2}) \mathcal{R}(\rho_{21}^{T_1})]^2= [\mbox{tr}(\rho_1^2)]^2$, this also being the only non-vanishing eigenvalue of 
$\mathcal{P}(\{1,2,1,2\})$. Further for tri-separable pure states the eigenvalues of  $\mathcal{P}(\{1,2,3\})$ and $\mathcal{P}(\{1,2,1,2\})$ are all zero except one whose value is 1.

In conclusion this paper has pointed to a close connection between local invariants and the two operations that have hitherto been 
used to reveal entanglement, namely partial transpose and realignment. Properties of the resulting operators have been discussed and 
some simple consequences for qudit tripartite systems have been pointed out. It is hoped that this will lead to the construction of 
useful multipartite entanglement measures, and an understanding of invariants.

\acknowledgements{AL thanks Prabha Mandyam for pointing out reference \cite{Williamson11}.}

\bibliography{ref2010,refs}
\end{document}